\newtheorem{theorem}{Theorem}[section]
\newtheorem{lemma}[theorem]{Lemma}
\newtheorem{proposition}[theorem]{Proposition}
\newcommand{\plusminus}{\pm}
\newcommand{\beqa}{\begin{eqnarray*}}
\newcommand{\eeqa}{\end{eqnarray*}\par\noindent}
\newcommand{\VV}{\mathcal{V}}
\newcommand{\WW}{\mathcal{W}}
\newcommand{\lrarr}{\longrightarrow}
\newcommand{\rarr}{\rightarrow}
\newcommand{\ie}{\textit{i.e.}~}
\newcommand{\DD}{\mathcal{D}}
\newcommand{\EE}{\mathcal{E}}
\newcommand{\Real}{\mathbb{R}}
\newcommand{\IFF}{\; \Longleftrightarrow \;}
\newcommand{\AND}{\; \wedge \;}
\newcommand{\card}[1]{|#1|}
\newcommand{\UU}{\mathcal{U}}
\newcommand{\hL}{h_{\Lambda}}
\newcommand{\hl}{h^{\lambda}}
\newcommand{\vv}{\mathbf{v}}
\newcommand{\EEj}[1]{\EE^{(#1)}}
\newcommand{\ej}[1]{e^{(#1)}}
\newcommand{\tus}{\omega_{U,s}}
\newcommand{\tusp}{\omega_{U',s'}}
\newcommand{\tusq}{\omega_{U'',s''}}
\newcommand{\vus}{\mathbf{v}[U,s]}
\newcommand{\vusq}{\mathbf{v}[U'',s'']}
\newcommand{\mus}{\mu_{U,s}}
\newcommand{\musp}{\mu_{U',s'}}
\newcommand{\musq}{\mu_{U'',s''}}
\newcommand{\sm}[1]{s_{m := #1}}
\newcommand{\ww}{\mathbf{w}}
\newcommand{\mc}{(X, \UU)}
\newcommand{\Omg}{\Omega}
\newcommand{\DS}{\DD^{\plusminus}}
\newcommand{\Prob}{\DD^{+}}
\newcommand{\mO}{m_{\Omg}}
\title{No-Signalling Is Equivalent To Free Choice of Measurements}
\author{Samson Abramsky
\institute{Department of Computer Science\\ University of Oxford}
\email{samson.abramsky@cs.ox.ac.uk}
\and Adam Brandenburger
\institute{Stern School of Business and Polytechnic School of Engineering \\ New York University}
\email{adam.brandenburger@stern.nyu.edu}
\and Andrei Savochkin
\institute{New Economic School \\ Skolkovo, Moscow 143025}
\email{asavochkin@nes.ru}
}
\begin{document}

\maketitle

\begin{abstract}
No-Signalling is a fundamental constraint on the probabilistic predictions made by physical theories. It is usually justified in terms of the constraints imposed by special relativity.
However, this justification is not as clear-cut as is usually supposed.
We shall give a different perspective on this condition by showing an equivalence between No-Signalling and Lambda Independence, or ``free choice of measurements'', a condition on hidden-variable theories which is needed to make no-go theorems such as Bell's theorem non-trivial.
More precisely, we shall show that a probability table describing measurement outcomes is No-Signalling if and only if it can be realized by a Lambda-Independent hidden-variable theory of a particular canonical form, in which the hidden variables correspond to non-contextual deterministic predictions of measurement outcomes. The key proviso which avoids contradiction with Bell's theorem is that we consider hidden-variable theories with \emph{signed} probability measures over the hidden variables --- \ie negative probabilities. Negative probabilities have often been discussed in the literature on quantum mechanics. We use a result proved previously in ``The Sheaf-theoretic Structure of Locality and Contextuality'' by Abramsky and Brandenburger, which shows that they give rise to, and indeed characterize, the entire class of No-Signalling behaviours.
In the present paper, we put this result in a broader context, which reveals the surprising consequence that the No-Signalling condition is equivalent to the apparently completely different notion of free choice of measurements.
\end{abstract}

\section{Introduction}

The purpose of this note is to show a precise, surprising and heretofore unrealized connection between two important notions in the foundations of quantum mechanics.

The first of these is \emph{No-Signalling}: a condition on the probability tables arising from experimental scenarios which is usually taken to reflect the physical constraints imposed by special relativity on correlations between spatially distributed observations. This condition says that, when Alice and Bob perform measurements in spacelike-separated locations, the marginal probabilities for Alice's observations of outcomes for her measurements are independent of Bob's choice of measurement setting \cite{popescu1994quantum}. This can be taken to reflect the inability, under relativistic constraints, for information about Bob's settings to reach Alice's site in time to influence her outcomes. While this seems intuitively plausible, as far as we are aware there is no \emph{formal} derivation of this principle from precise assumptions. Also, there are subtleties lurking; notably, the relativistic ideas apply to individual histories of  space-time events (selections of measurement settings and observations of outcomes), while the probabilities refer, presumably, to repeated runs. Prima facie, it seems quite possible that there could be statistical correlations over a repeated run of measurements of the kind prohibited by No-Signalling, without requiring any super-luminal signalling.
One may try to finesse this point by taking a Bayesian view of the probabilities, but this does not reflect the usual physical interpretation of actual performance of measurements and observation of outcomes.

The second notion is \emph{Lambda-Independence} \cite{dickson1999quantum}, which is a condition on \emph{hidden-variable theories} which may be used to explain the results of observations in local realistic terms. The idea is that the empirically observed probabilities of measurement outcomes can be obtained by averaging over possible values of the hidden variable, which is taken to contain a complete description of (all relevant aspects of) physical reality. To avoid trivialising the task of recovering empirical content through hidden variables, it is necessary to make the assumption that the probability distribution on the hidden variable is independent of the probabilities on measurement settings induced by a given value of the hidden variable. Otherwise, if for example the hidden variable actually \emph{determines} which measurement settings can be chosen, then the hidden-variable theory becomes a self-fulfilling prophecy, and \emph{any} empirical behaviour can trivially be reproduced.
An alternative and more evocative name for the Lambda-Independence condition is \emph{free choice of measurements}, as captured in the probabilistic independence assumption we have described.

We note that, beyond the lack of a formal derivation of no-signalling to which we have already alluded, there are additional grounds for doubting that this condition is essentially relativistic in nature. In particular, it is satisfied by ordinary quantum mechanics, with a classical background. Indeed, it can be seen to arise purely as a property of families of commuting sets of observables, without any consideration of tensor product structure or any other reflection of space-like separation. See \cite[Section 9]{abramsky2011unified} for a careful derivation and discussion of a general form of No-Signalling in this sense.

What we shall show is that these two, apparently very different conditions are actually \emph{equivalent} in the following sense: an empirical model --- essentially a probability table for outcomes of measurements as discussed above --- is No-Signalling if and only if it can be realized by a Lambda-Independent local hidden-variable model. This result comes, of course, with an important proviso, since otherwise it would be in conflict with standard no-go results such as Bell's theorem \cite{bell1964einstein}. The proviso is that we must allow a general form of local hidden-variable model which admits \emph{signed probability measures} on the hidden variables; or, more colloquially, negative probabilities. Negative probabilities have been widely discussed in the literature on quantum mechanics; see e.g. \cite{wigner1932quantum,Dirac42,moyal1949quantum,feynman1987negative,sudarshan1993new}. We shall not discuss physical or operational interpretations of negative probabilities here; for this, see \cite{negprob}. The salient point is that negative probabilities, when used with an otherwise standard, or even restricted notion of local hidden-variable model, not only account for all behaviour which can be realized in quantum mechanics, but in fact yield all no-signalling behaviour. The precise correspondence between these notions offers a new perspective on the no-signalling condition, and demands to be understood. 

We should also mention that the setting in which we prove this result is rather general, encompassing not only the familiar type of Bell scenarios, for any number of parties, measurement settings and outcomes, but a much wider range of situations, including e.g. Kochen-Specker configurations \cite{kochen1975problem}. This follows the general setting for a unified account of non-locality and contextually developed in \cite{abramsky2011unified}. In fact, the main mathematical content of our results is essentially implicit in \cite[Section 5]{abramsky2011unified}. In the present note, we develop the conceptual significance of these ideas, and in particular we expose the equivalence between No-Signalling and Lambda-Independence.

\section{The Main Result}
\subsection{Notation}
Given sets $X$, $Y$, we write $Y^X$ for the set of functions from $X$ to $Y$. Given a function $s : X \rarr Y$ and $U \subseteq X$, we write $s |_{U} : U \rarr Y$ for the restriction of $s$ to $U$. We write $\card{X}$ for the cardinality of a finite set $X$.

A \emph{measurement cover} is a pair $\mc$, where $X$ is a finite set, and $\UU$ is a family of subsets of $X$. We generally assume that $\UU$ is an anti-chain, \ie distinct elements of $\UU$ are incomparable under set inclusion. The idea is that $X$ is a set of measurement labels, and the sets $U \in \UU$ are maximal sets of compatible measurements, which can be performed jointly.

We fix a set $O$ of \emph{measurement outcomes}, and define $\Omg := O^X$, the set of assignments of outcomes to all measurements simultaneously. We can think of $\Omg$ as  a set of \emph{canonical hidden variables}.

Given a set $S$, we write $\DS(S)$ for the set of (finite-support) \emph{signed probability measures} on $S$, \ie the set of maps $d : S \rarr \Real$ of finite support, and such that
\[ \sum_{x \in S} d(x) = 1 . \]
We write $\Prob(S)$ for the subset of $\DS(S)$ of measures valued in the non-negative reals; these are just the  probability distributions on $S$ with finite support.

Given a measurement cover $\mc$ and set of outcomes $O$, we define a set of \emph{atomic events} $E$:
\[ E := \{ (U, s) \mid U \in \UU \AND s \in O^U \} . \]
Thus $(U, s)$ is the event that the measurements in $U$ were performed, and the outcome $s(x)$ was observed for each $x \in U$.

An \emph{empirical model} over $(\mc, O)$ is a probability distribution $e \in \Prob(E)$ such that, for each $U \in \UU$,
\[ e(U) := \sum_{s \in O^U} e(U, s) \; > \; 0. \]
For each $U \in \UU$, $e$ determines a probability distribution
$e_U \in \Prob(O^U)$ as the conditional probability $e_U(s) := e((U, s) | U)$.

A \emph{signed canonical hidden-variable model} (schv model) is a signed measure $m \in \DS(\Omega \times \UU)$. A \emph{probabilistic canonical hidden-variable model} (pchv model) is an schv model $p$ such that $p \in \Prob(\Omega \times \UU)$.

Given $(U, s ) \in E$, we define
\[ \Omg(U, s) := \{ \omega \in \Omg \mid \omega |_U = s  \} . \]
This is the set of canonical hidden variables which are consistent with the atomic event $(U, s)$. 
Note that, for each $U \in \UU$, the sets  $\Omg(U, s)$ as $s$ ranges over $O^U$ partition $\Omega$.

An schv model $m$ determines a signed measure $\hat{m} \in \DS(E)$ by marginalization:
\[ \hat{m} (U,s) := \sum_{\omega \in \Omg(U, s)} m(\omega, U) . \]
We say than an schv model $m$ \emph{realizes} an empirical model $e$ if $\hat{m} = e$.

\begin{proposition}
\label{pchvprop}
For every empirical model $e$, there is a pchv model $p$ which realizes $e$.
\end{proposition}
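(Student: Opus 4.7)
The plan is to construct $p$ explicitly by pushing the atomic-event mass $e(U,s)$ onto a single, canonically chosen extension of $s$ to all of $X$. Since we only need some pchv model realizing $e$, there is no need for a subtle distribution across all compatible $\omega$; any fixed choice of extension will do.

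First I would pick a basepoint outcome $o_0 \in O$ (the set $O$ is tacitly nonempty; if $O^U$ were empty for some $U \in \UU$, the condition $e(U) > 0$ in the definition of empirical model would be violated). Then, for each atomic event $(U, s) \in E$, define a canonical extension $\omega_{U,s} \in \Omg = O^X$ by
\[ \omega_{U,s}(x) \;=\; \begin{cases} s(x), & x \in U, \\ o_0, & x \in X \setminus U. \end{cases} \]
By construction $\omega_{U,s}|_U = s$, and the assignment $s \mapsto \omega_{U,s}$ is injective on $O^U$ for each fixed $U$.

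Next I would define $p \in \DS(\Omg \times \UU)$ by
\[ p(\omega, U) \;=\; \begin{cases} e(U, s), & \omega = \omega_{U,s} \text{ for some } s \in O^U, \\ 0, & \text{otherwise}. \end{cases} \]
Since $e$ takes values in $[0,1]$, $p$ is nonnegative, so $p \in \Prob(\Omg \times \UU)$ provided it sums to one; and indeed
\[ \sum_{(\omega, U)} p(\omega, U) \;=\; \sum_{U \in \UU} \sum_{s \in O^U} e(U, s) \;=\; \sum_{(U,s) \in E} e(U, s) \;=\; 1, \]
using that $e \in \Prob(E)$. Hence $p$ is a genuine pchv model.

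Finally, I would verify the realization condition $\hat{p} = e$. Fix $(U, s) \in E$. By definition,
\[ \hat{p}(U, s) \;=\; \sum_{\omega \in \Omg(U, s)} p(\omega, U). \]
Among the $\omega$ with $\omega|_U = s$, the only one for which $p(\omega, U) \neq 0$ is $\omega_{U,s}$ itself (any other $\omega_{U', s'}$ appearing in the support of $p(-,U)$ would force $U' = U$ and $\omega_{U, s'}|_U = s' = s$), so $\hat{p}(U,s) = p(\omega_{U,s}, U) = e(U,s)$, as required. There is no real obstacle here; the proposition is essentially a sanity check that the pchv framework is rich enough to encode any empirical model, and its whole content is that one may freely extend partial assignments to global ones once probabilities are no longer constrained to be consistent across overlapping contexts.
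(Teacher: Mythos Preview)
Your proof is correct and follows essentially the same approach as the paper: both constructions push the mass $e(U,s)$ into the fibre $\Omg(U,s)\times\{U\}$ and then check that the marginal recovers $e$. The only cosmetic difference is that the paper spreads the weight uniformly over $\Omg(U,s)$, whereas you concentrate it on a single canonical extension $\omega_{U,s}$; the paper itself notes that this distribution is arbitrary, so your choice is equally valid.
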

\begin{proof}
Given $e$, we can define e.g.
\[ p(\omega, U) := \frac{e(U, s)}{| \Omg(U, s) |}, \qquad s = \omega | U . \]
Clearly $\hat{p} = e$. Moreover, 
\[ \begin{array}{lcl}
\sum_{\omega \in \Omg, U \in \UU} p(\omega, U) & = & \sum_{U \in \UU} \sum_{s \in O^U} \sum_{\omega' \in \Omg(U, s)} p(\omega', U) \\
& = & \sum_{U \in \UU} \sum_{s \in O^U} e(U, s) \; = \; 1 . 
\end{array}
\]
Of course, $p$ is far from unique: we can divide the weight $e(U, s)$ arbitrarily among the $p(\omega, U)$ for $\omega \in \Omg(U, s)$.
\end{proof}
This result shows that, without additional constraints, realization by deterministic hidden variables is trivially achieved.

The key condition is \emph{Lambda-Independence}: the distribution on the hidden variables should be statistically independent of the choice of measurement context. This can be understood as a form of free choice of measurements by the observers, unconstrained by the hidden variable. It corresponds to a standard statistical assumption, that the observed outcomes of experiments are not subject to \emph{selection bias}.

Formally, we say that an schv model $m$ satisfies Lambda-Independence if it factors as a product: $m = m_{\Omg} m_{\UU}$, where $m_{\Omg} \in \DS(\Omg)$ and $m_{\UU} \in \DS(\UU)$ are the marginals of $m$.
\begin{proposition}
An schv model $m$ is Lambda Independent if and only if the following condition holds: for all $\omega \in \Omg$, $U, U' \in \UU$,
\[ m(\omega | U) = m(\omega | U') . \]
\end{proposition}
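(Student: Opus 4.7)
The plan is to prove both directions by direct computation, unpacking the definition of the conditional $m(\omega \mid U) := m(\omega, U)/m_{\UU}(U)$ (defined whenever the denominator is nonzero) and exploiting the fact that $m_{\UU}$ is a signed probability measure, so its total mass is $1$.

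For the forward direction, suppose $m = m_{\Omg}\, m_{\UU}$, i.e.\ $m(\omega, U) = m_{\Omg}(\omega)\, m_{\UU}(U)$ for all $\omega$ and $U$. Then whenever $m_{\UU}(U) \neq 0$,
\[ m(\omega \mid U) \; = \; \frac{m(\omega, U)}{m_{\UU}(U)} \; = \; \frac{m_{\Omg}(\omega)\, m_{\UU}(U)}{m_{\UU}(U)} \; = \; m_{\Omg}(\omega), \]
which is manifestly independent of $U$, so the displayed equality in the proposition holds.

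For the converse, suppose $m(\omega \mid U) = m(\omega \mid U')$ for all $\omega, U, U'$, and write $f(\omega)$ for this common value. Then $m(\omega, U) = f(\omega)\, m_{\UU}(U)$ for every $\omega, U$. Marginalizing over $U$ and using that $\sum_{U \in \UU} m_{\UU}(U) = 1$, we get
\[ m_{\Omg}(\omega) \; = \; \sum_{U \in \UU} m(\omega, U) \; = \; f(\omega) \sum_{U \in \UU} m_{\UU}(U) \; = \; f(\omega). \]
Substituting back yields $m(\omega, U) = m_{\Omg}(\omega)\, m_{\UU}(U)$, which is precisely the factorization condition defining Lambda-Independence.

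The only real subtlety, and the main point to be careful about, is that $m$ is a \emph{signed} measure, so $m_{\UU}(U)$ could in principle be zero even when $m_{\UU}$ is non-trivial. One must therefore read the displayed equation as implicitly restricted to those $U$ with $m_{\UU}(U) \neq 0$, or equivalently assume (as is natural in the intended application to empirical models, where contexts always carry positive weight) that $m_{\UU}(U) \neq 0$ for all $U \in \UU$. Beyond this bookkeeping, nothing about the argument uses positivity, so it goes through verbatim for signed measures; this is what makes the characterization applicable in the signed hidden-variable setting that drives the main theorem.
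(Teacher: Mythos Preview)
The paper states this proposition without proof, treating it as immediate. Your argument is correct and is the natural one. One small quibble: the two readings you offer for handling the case $m_{\UU}(U)=0$ are not strictly equivalent. Under the first (restrict the displayed condition to those $U$ with nonzero marginal), the converse can fail: a signed $m(\,\cdot\,,U)$ may be nonzero while its sum $m_{\UU}(U)$ vanishes, and then the factorization would force $m(\omega,U)=m_{\Omg}(\omega)\cdot 0=0$, which need not hold. The second reading (assume $m_{\UU}(U)\neq 0$ for all $U$) is the correct one here, and as you note it is automatic in the intended application, where $m_{\UU}(U)=e(U)>0$.
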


We say that an empirical model which is realized by a pchv model satisfying Lambda Independence admits \emph{local hidden variables}. This definition is equivalent  to the standard definitions, which allow a broader class of hidden variable models. For a proof of this at the level of generality of our discussion here, see Theorem~7.1 of \cite{abramsky2011unified}.

In this language, Bell's Theorem can be stated as follows.
\begin{theorem}[Bell's Theorem]
There are empirical models which can be realized in quantum mechanics, and which do not admit local hidden variables.
\end{theorem}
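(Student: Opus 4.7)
The plan is to prove Bell's Theorem by exhibiting a specific quantum-realizable empirical model and showing it cannot be realized by any pchv model satisfying Lambda-Independence. I take the standard CHSH scenario: $X = \{a_1, a_2, b_1, b_2\}$, $\UU = \{\{a_i, b_j\} : i, j \in \{1, 2\}\}$, and $O = \{0, 1\}$ (identified with $\pm 1$ when forming correlators). Let $e_{\text{QM}}$ be the empirical model obtained from measuring a spin singlet with Alice's and Bob's measurement directions chosen so that $|\langle A_1 B_1\rangle + \langle A_1 B_2\rangle + \langle A_2 B_1\rangle - \langle A_2 B_2\rangle| = 2\sqrt{2}$. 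The Born rule delivers a bona fide empirical model in the sense of Section~2.1, with every context marginal $e_{\text{QM}}(U)$ strictly positive.

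Next I would assume for contradiction that $e_{\text{QM}}$ is realized by a pchv model $p \in \Prob(\Omg \times \UU)$ satisfying Lambda-Independence. By Proposition~2.3, the conditional $p(\cdot \mid U)$ is a single non-negative distribution $\mu \in \Prob(\Omg)$ independent of $U$; combining this with the factorization $p = \mu \cdot p_{\UU}$ and $\hat{p} = e_{\text{QM}}$ gives the key identity
\[ e_U(s) \;=\; \mu\bigl(\Omg(U, s)\bigr) \qquad (U \in \UU,\; s \in O^U). \]
Thus every context-conditioned distribution $e_U$ is the pushforward of the single global measure $\mu$ on $\Omg$ along the restriction $\omega \mapsto \omega|_U$.

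From this factorization, the CHSH inequality follows by the standard argument. For each $\omega \in \Omg$ the deterministic quantity $\omega(a_1)\omega(b_1) + \omega(a_1)\omega(b_2) + \omega(a_2)\omega(b_1) - \omega(a_2)\omega(b_2)$ takes values in $\{-2, +2\}$. Integrating pointwise against $\mu$ and using the identity above to rewrite the integrals as the expectation values computed from $e_{\text{QM}}$ yields the CHSH bound of $2$, contradicting the Tsirelson value $2\sqrt{2}$.

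The main obstacle is also the conceptually decisive step: non-negativity of $\mu$ is essential to pass from the pointwise bound to a bound on the $\mu$-average. If $\mu$ were permitted to be signed, the inequality would fail — this is precisely the loophole that the paper's main equivalence will exploit to realize every no-signalling table by a Lambda-Independent schv model. Everything else is routine bookkeeping: verifying that the chosen quantum table is no-signalling (hence an empirical model of the paper's type), and expressing the four correlators $\langle A_i B_j \rangle$ as signed combinations of the probabilities $e_U(s)$ for $U = \{a_i, b_j\}$.
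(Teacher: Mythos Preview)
The paper does not actually supply a proof of Bell's Theorem: it is stated as a classical result, with the citation to Bell's original paper, and the surrounding text merely reformulates it in the paper's language of pchv models and Lambda-Independence. There is therefore nothing in the paper to compare your argument against.

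That said, your sketch is the standard and correct CHSH argument, cast into the paper's framework. Two small remarks. First, your reference to ``Proposition~2.3'' is a numbering slip: in the paper, theorems and propositions share a counter, so 2.3 is Bell's Theorem itself; the characterization of Lambda-Independence you invoke is Proposition~2.2, and the identity $e_U(s) = \mu(\Omg(U,s))$ is the content of Proposition~2.4. Second, your closing observation about non-negativity of $\mu$ being the decisive hypothesis is exactly the point the paper goes on to exploit: dropping it is what allows Lambda-Independent schv models to realize every no-signalling table.
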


For a detailed discussion of what it means for empirical models to be realized in quantum mechanics, see \cite{abramsky2011unified}.

\begin{proposition}
\label{liequivprop}
Let $m = m_{\UU}m_{\Omg}$ be a Lambda-Independent schv model. An empirical model $e$ is realized by $m$ if and only if the following conditions hold:
\[ \begin{array}{ll}
(1) & \forall U \in \UU. \; e(U) = m_{\UU}(U) \\
(2) & \forall (U, s) \in E. \; e_{U}(s) = \sum_{\omega \in \Omg(U, s)} m_{\Omg}(\omega) .
\end{array}
\]
\end{proposition}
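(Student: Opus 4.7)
The plan is to unpack both sides of the claimed equivalence directly, using the factorization $m(\omega, U) = m_{\Omg}(\omega) \cdot m_{\UU}(U)$ guaranteed by Lambda-Independence, and the partition property that, for each fixed $U \in \UU$, the family $\{\Omg(U, s) \mid s \in O^U\}$ is a partition of $\Omg$. These two ingredients do essentially all of the work.

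First I would compute the marginalization $\hat{m}$ explicitly. By the definition of $\hat{m}$ and Lambda-Independence,
\[ \hat{m}(U, s) \;=\; \sum_{\omega \in \Omg(U, s)} m(\omega, U) \;=\; m_{\UU}(U) \sum_{\omega \in \Omg(U, s)} m_{\Omg}(\omega). \]
This single identity is the engine of both directions.

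For the forward direction, assume $\hat{m} = e$. Fixing $U \in \UU$ and summing the identity over $s \in O^U$, the inner sums reassemble into $\sum_{\omega \in \Omg} m_{\Omg}(\omega) = 1$ (since $m_{\Omg}$ is a signed probability measure on $\Omg$ and the $\Omg(U, s)$ partition $\Omg$), so I get $e(U) = m_{\UU}(U)$, giving (1). Since $e(U) > 0$ by the definition of empirical model, I can divide, and (2) drops out as $e_U(s) = e(U,s)/e(U) = \sum_{\omega \in \Omg(U, s)} m_{\Omg}(\omega)$.

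For the converse, assume (1) and (2). Substituting into the boxed identity gives
\[ \hat{m}(U, s) \;=\; e(U) \cdot e_U(s) \;=\; e(U, s), \]
using the definition $e_U(s) = e((U, s) \mid U)$, so $\hat{m} = e$ on $E$, i.e.\ $m$ realizes $e$. There is no real obstacle here: the only point requiring any care is the use of the partition property together with $\sum_\omega m_{\Omg}(\omega) = 1$ to convert a double sum into the marginal $e(U)$, which is why the Lambda-Independence hypothesis enters in exactly the right way.
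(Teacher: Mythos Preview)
Your proof is correct and is exactly the straightforward unpacking one would expect; the paper in fact states this proposition without proof, treating it as an immediate consequence of the definitions. Your argument supplies precisely that verification, with the one nontrivial observation---that the partition property plus $\sum_{\omega} m_{\Omg}(\omega)=1$ collapses the double sum to $m_{\UU}(U)$---identified cleanly.
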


We now consider the question of which empirical models can be realized by schv models satisfying Lambda Independence.

We say that an empirical model $e$ satisfies \emph{No-Signalling} if the following condition holds: for all $U, U' \in \UU$,
\[ e_U(s |_{U \cap U'}) = e_{U'}(s |_{U \cap U'}). \]
This says that the distributions conditioned on different measurement choices have common marginals; thus the choice of additional measurements $U \setminus V$ outside a compatible set $V$ has no effect on the observed statistics for the measurement outcomes in $V$. This is easily seen to be equivalent to the standard formulation of No-Signalling in Bell-type scenarios, and to be satisfied generally in quantum mechanics \cite{abramsky2011unified}.

\begin{proposition}
\label{liimpnsprop}
Let $m$ be an schv model satisfying Lambda Independence which realizes an empirical model $e = \hat{m}$. Then $e$ satisfies No-Signalling.
\end{proposition}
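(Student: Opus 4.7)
The plan is to reduce the statement to Proposition~\ref{liequivprop}, which has just been proved (or stated). That proposition tells us that if $m = m_{\UU} m_{\Omg}$ realizes $e$, then for every $U \in \UU$ and $s \in O^U$,
\[ e_U(s) \;=\; \sum_{\omega \in \Omg(U, s)} m_{\Omg}(\omega) . \]
In particular, $e_U$ is obtained simply by marginalizing $m_{\Omg}$ along the restriction map $\Omg \rarr O^U$, and crucially this does not depend on $U$ other than through its role as the domain of the restriction.

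The main step is then to push the marginalization one level further. Fix $U, U' \in \UU$ and let $V := U \cap U'$. For any $t \in O^V$, I would compute
\[ e_U(t) \;=\; \sum_{s \in O^U,\, s|_V = t} e_U(s) \;=\; \sum_{s \in O^U,\, s|_V = t}\; \sum_{\omega \in \Omg(U, s)} m_{\Omg}(\omega) . \]
The sets $\Omg(U, s)$ for varying $s$ with $s|_V = t$ are pairwise disjoint, and their union is exactly $\{\omega \in \Omg \mid \omega|_V = t\}$, since specifying $\omega|_U$ with $\omega|_V = t$ is the same as specifying $\omega|_V = t$ together with a free choice of $\omega|_{U\setminus V}$. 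So the double sum collapses to
\[ e_U(t) \;=\; \sum_{\omega \in \Omg,\, \omega|_V = t} m_{\Omg}(\omega) . \]

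Now the right-hand side makes no reference to $U$: it is a function purely of $V$ and $t$. Running the same calculation with $U'$ in place of $U$ yields the identical expression, so $e_U(s|_V) = e_{U'}(s|_V)$ for every $s$, which is precisely the No-Signalling condition. There is no real obstacle here; the only thing to be careful about is the re-indexing in the middle step, and the observation that in the signed setting we are still allowed to rearrange finite sums freely since $m_{\Omg}$ has finite support.
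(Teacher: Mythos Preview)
Your proof is correct and follows essentially the same route as the paper: both invoke Proposition~\ref{liequivprop} to write $e_U(s)=\sum_{\omega\in\Omg(U,s)} m_{\Omg}(\omega)$, then sum over all $s\in O^U$ restricting to a given section on $U\cap U'$ and observe that the resulting union of the $\Omg(U,s)$ is exactly $\Omg(U\cap U',t)$, an expression symmetric in $U$ and $U'$. The only differences are notational (the paper writes $s\in O^{U\cap U'}$ and $s'\in O^U$ where you write $t$ and $s$, and names the fibre set $S$ explicitly).
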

\begin{proof}
For  $U, U' \in \UU$ and $s \in O^{U \cap U'}$, let $S := \{ s' \in O^U \mid s' |_{U \cap U'} = s \}$. 
Note that, for all $\omega \in \Omg$:
\[ \omega |_U \in S \IFF \omega |_{U \cap U'} = s . \]
Now, using Proposition~\ref{liequivprop},
\[ \begin{split}
e_{U}(s) & = \; \sum_{s' \in S} e_U(s') \; = \; \sum_{s' \in S} \sum_{\omega \in \Omg(U, s')} m_{\Omg}(\omega) \\ 
& = \; \sum_{\omega \in \Omg(U \cap U', s)} m_{\Omg}(\omega) \; = \; e_{U'}(s) .
\end{split} \]
\end{proof}

Note however that the converse does not hold, even for pchv models.
\begin{proposition}
There are pchv models $p$ such that $p$ does not satisfy Lambda-Independence, while the realized empirical model $\hat{p}$ satisfies No-Signalling.
\end{proposition}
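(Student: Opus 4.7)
The plan is to exploit the non-uniqueness of pchv realizations noted at the end of the proof of Proposition~\ref{pchvprop}: the weight $e(U,s)$ can be apportioned among the elements of $\Omg(U,s)$ in many different ways, so a generic empirical model $e$ admits many realizing pchv models. I will use this freedom to construct a No-Signalling empirical model $e$ together with a realizing pchv $p$ whose conditional distribution on $\Omg$ depends on the context, and then invoke the characterization of Lambda-Independence as the condition $m(\omega \mid U) = m(\omega \mid U')$ for all $\omega, U, U'$ to conclude that $p$ is not Lambda-Independent.

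Concretely, a minimal example suffices. I would take $X = \{a, b\}$, $O = \{0, 1\}$, and $\UU = \{\{a\}, \{b\}\}$, which is a valid anti-chain. The uniform empirical model $e(\{a\}, i) = e(\{b\}, i) = 1/4$ for $i \in \{0, 1\}$ satisfies No-Signalling vacuously, since $\{a\} \cap \{b\} = \vn$ makes the marginal agreement condition trivial. On $\Omg = O^X = \{(0,0), (0,1), (1,0), (1,1)\}$, I would define $p$ by concentrating each weight $e(U, s) = 1/4$ on a single element of $\Omg(U, s)$, chosen asymmetrically between the two contexts. For example, set
\[ p((0,0), \{a\}) = p((1,1), \{a\}) = p((0,1), \{b\}) = p((1,0), \{b\}) = \tfrac{1}{4}, \]
with all other values equal to $0$.

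Verification then consists of three straightforward checks. First, $p$ is non-negative and sums to $1$, so $p \in \Prob(\Omg \times \UU)$ is genuinely a pchv model rather than just an schv model. Second, marginalizing via $\hat{p}(U, s) = \sum_{\omega \in \Omg(U, s)} p(\omega, U)$ recovers $e$, and $e$ is No-Signalling. Third, the $\UU$-marginal is $p_{\UU}(\{a\}) = p_{\UU}(\{b\}) = 1/2$, whence $p((0,0) \mid \{a\}) = 1/2$ while $p((0,0) \mid \{b\}) = 0$; these disagree, so by the characterization of Lambda-Independence recalled above, $p$ is not Lambda-Independent. There is no serious obstacle in this plan; the only care needed is to keep the weights non-negative (so as to land in $\Prob$ rather than $\DS$) and to make the context-conditionals genuinely differ, both of which are transparent from the explicit construction.
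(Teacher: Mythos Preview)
Your construction is correct: the explicit pchv model you build is non-negative, sums to $1$, realizes the uniform empirical model $e$, and the conditionals $p(\,\cdot \mid \{a\})$ and $p(\,\cdot \mid \{b\})$ visibly differ, so Lambda-Independence fails while $\hat{p}=e$ is (trivially) No-Signalling.

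The paper takes a different, non-constructive route: it invokes Bell's theorem to obtain a quantum (hence No-Signalling) empirical model $q$ that admits \emph{no} Lambda-Independent pchv realization, and then applies Proposition~\ref{pchvprop} to produce some pchv $p$ realizing $q$; this $p$ must fail Lambda-Independence. Your argument is more elementary and self-contained, avoiding Bell's theorem entirely, which is a genuine advantage. On the other hand, the paper's argument establishes something strictly stronger along the way: there are No-Signalling empirical models for which \emph{every} realizing pchv fails Lambda-Independence, whereas in your example the uniform distribution on $\Omg \times \UU$ is a Lambda-Independent pchv realizing the same $e$. Both proofs suffice for the proposition as stated; they simply trade off elementarity against the strength of the incidental conclusion.
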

\begin{proof}
By Bell's theorem, there are quantum models $q$ which are not realized by any pchv model satisfying Lambda-Independence. By the No-Signalling theorem, any such $q$ satisfies No-Signalling. 
By Proposition~\ref{pchvprop}, $q$ is realized by a pchv model $p$; it follows that $p$ cannot satisfy Lambda-Independence.
\end{proof}

We now embark on the proof of our main result, the equivalence of No-Signalling and Lambda-Independence.

Let $d := \card{E}$ and $p := \card{\Omega}$. An empirical model induces a vector $\vv$ in the real vector space $\Real^d$, with $\vv_{(U, s)} := e_U(s)$.
Similarly, we can regard a signed distribution $\mO \in \DS(\Omg)$ as a vector in $\Real^p$.

We define a linear map $L : \Real^p \lrarr \Real^d$: 
\[ L(\vv)_{(U,s)} := \sum_{\omega |_U = s} \vv_{\omega} . \]

\begin{proposition}
\label{inDSprop}
If $L(\vv)$ is induced by an empirical model $e$, then $\vv \in \DS(\Omg)$.
\end{proposition}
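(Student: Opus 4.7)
The goal is to verify the defining condition of $\DS(\Omg)$ for $\vv$, namely that its entries sum to $1$ (finite support is automatic since $\Omg$ is finite, and the entries are real by construction).

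My plan is to exploit the key fact noted immediately after the definition of $\Omg(U,s)$: for any fixed $U \in \UU$, the sets $\Omg(U,s)$ partition $\Omg$ as $s$ ranges over $O^U$. This means that summing the defining formula for $L(\vv)_{(U,s)}$ over $s \in O^U$ simply collects each $\vv_\omega$ exactly once, giving
\[ \sum_{s \in O^U} L(\vv)_{(U,s)} \;=\; \sum_{s \in O^U} \sum_{\omega \,|\, \omega|_U = s} \vv_\omega \;=\; \sum_{\omega \in \Omg} \vv_\omega . \]

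Next, I would pick any $U \in \UU$ (such a $U$ exists provided $\UU$ is nonempty, which is implicit in the empirical model having $e(U) > 0$) and use the hypothesis $L(\vv)_{(U,s)} = e_U(s)$. Since $e_U$ is a probability distribution on $O^U$ by definition of empirical model, we have $\sum_{s \in O^U} e_U(s) = 1$. Combining with the identity from the previous paragraph gives $\sum_{\omega \in \Omg} \vv_\omega = 1$, so $\vv \in \DS(\Omg)$.

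There is no real obstacle here; the proof is a one-line calculation once one recognises the partition property of the fibers of the restriction map $\omega \mapsto \omega|_U$. The only subtlety worth remarking on is that the choice of $U$ is immaterial: any $U \in \UU$ yields the same total $\sum_\omega \vv_\omega$, which is a consistency check implicitly built into the fact that $L(\vv)$ is induced by a genuine empirical model (where all $e_U$ are normalised).
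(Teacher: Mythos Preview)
Your proof is correct and is essentially identical to the paper's own argument: both exploit the fact that, for any fixed $U \in \UU$, the sets $\Omg(U,s)$ partition $\Omg$, so summing $L(\vv)_{(U,s)} = e_U(s)$ over $s \in O^U$ yields $\sum_{\omega} \vv_\omega = 1$. The paper compresses this into a single displayed equation, while you spell out the intermediate step and remark on the irrelevance of the choice of $U$, but there is no difference in substance.
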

\begin{proof}
Since for each $U \in \UU$, $O^U$ is partitioned by the sets $\Omg(U,s)$, we have:
\[ \sum_{\omega \in \Omg} \vv_{\omega} \; = \; \sum_{s \in O^U} e_U(s) \; = \; 1 . \]
\end{proof}

We shall write $\WW$ for the subspace of $\Real^d$ spanned by the set of no-signalling empirical models, and $\VV$ for the subspace spanned by the image of $\DS(E)$ under $L$.
By Proposition~\ref{liimpnsprop}, $\VV \subseteq \WW$. 

We define the set of  \emph{partial contexts}:
\[ \Sigma \, := \, \{ V \subseteq X \mid \exists U \in \UU. \, V \subseteq U \} . \]
We  fix a standard set of outcomes
$O \, := \, \{ 1, \ldots , l \}$, and define $D := \sum_{U \in \Sigma} (l-1)^{\card{U}}$.

\begin{lemma}
$\dim \WW \leq D$.
\end{lemma}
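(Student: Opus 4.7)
The plan is to construct a linear map $\phi : \WW \lrarr \Real^D$ and show it is injective, which immediately gives $\dim \WW \leq D$. The guiding intuition is that any no-signalling vector has well-defined marginals on every partial context $V \in \Sigma$, and these marginals are in turn determined by their restrictions to outcome-tuples $s : V \rarr \{1, \ldots, l-1\}$ which avoid a single distinguished outcome $l$.

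The first step is to check that for every $\vv \in \WW$, every $V \in \Sigma$, and every $s \in O^V$, the marginal
\[ \vv_V(s) \; := \; \sum_{s' \in O^U, \, s'|_V = s} \vv_{(U, s')} \]
is independent of the choice of $U \in \UU$ with $V \subseteq U$. Fixing two such $U, U'$ (so $V \subseteq U \cap U'$), the desired equality is a single linear equation in the coordinates of $\vv$; by No-Signalling applied to $U, U'$ and further marginalization from $U \cap U'$ down to $V$, it is satisfied by every NS empirical model, hence by every element of their span $\WW$. I then define $\phi(\vv) \in \Real^D$ as the tuple of \emph{reduced} marginals $\vv_V(s)$ indexed by pairs $(V, s)$ with $V \in \Sigma$ and $s \in \{1, \ldots, l-1\}^V$; the number of such pairs is exactly $D = \sum_{V \in \Sigma} (l-1)^{\card{V}}$.

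The second step is to show $\phi$ is injective. A direct computation, obtained by splitting the sum defining $\vv_{V \setminus \{x\}}(s|_{V \setminus \{x\}})$ according to the value at the removed coordinate $x \in V$, gives the identity
\[ \vv_{V \setminus \{x\}}\bigl(s|_{V \setminus \{x\}}\bigr) \; = \; \sum_{i \in O} \vv_V\bigl(s[x := i]\bigr), \]
whence, when $s(x) = l$,
\[ \vv_V(s) \; = \; \vv_{V \setminus \{x\}}\bigl(s|_{V \setminus \{x\}}\bigr) \; - \; \sum_{i=1}^{l-1} \vv_V\bigl(s[x := i]\bigr). \]
I then induct on the number $k$ of coordinates of $s$ taking the value $l$: the base case $k = 0$ is built into $\phi(\vv)$ by construction, and for $k \geq 1$ the displayed equation expresses $\vv_V(s)$ in terms of values whose outcome-tuples contain exactly $k - 1$ many $l$'s. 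Hence every $\vv_V(s)$ is a fixed linear combination of entries of $\phi(\vv)$; if $\phi(\vv) = 0$, all such marginals vanish, and specializing to $V = U \in \UU$ forces $\vv_{(U, s)} = \vv_U(s) = 0$ for every atomic event, so $\vv = 0$.

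The substantive content---as opposed to combinatorial bookkeeping---is the well-definedness of $\vv_V$ on all of $\WW$ in the first step; this is precisely where No-Signalling enters, since for an arbitrary vector in $\Real^d$ the marginal sum $\sum_{s'|_V = s} \vv_{(U, s')}$ would depend on the ambient $U$. The induction in the second step is then purely combinatorial.
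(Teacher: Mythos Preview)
Your proof is correct and follows essentially the same strategy as the paper's: both arguments use the marginalization identity to express any section's value in terms of sections that avoid a distinguished outcome, proceeding by induction on the number of occurrences of that outcome. The only cosmetic differences are that you frame the conclusion via an injective map $\phi : \WW \to \Real^D$ whereas the paper exhibits $\WW$ inside the image of a map $F : \Real^D \to \Real^d$, and you take $l$ as the distinguished outcome while the paper uses $1$; you are also more explicit than the paper about the well-definedness of the marginals $\vv_V$ on partial contexts, which the paper uses implicitly.
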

\begin{proof}
For each $U \in \Sigma$ and $p \geq 0$, we define 
\[ \EEj{p}(U) \, := \, \{ s \in O^U \mid \card{s^{-1}(\{ 1 \})} \leq p \} , \]
the set of sections which map at most $p$ measurements to the outcome $1$.
Note that 
\[ \card{\bigcup_{U \in \Sigma} \EEj{0}(U)} = D . \]

Given a section $s$, we write $\sm{j}$ for the section defined by: 
\[ \sm{j}(m) = j, \qquad \sm{j}(m') = s(m'), \;\; (m' \neq m) . \]
Finally, given an empirical model $e$, we define
\[ \ej{p} := \{ e_U(s) \mid U \in \Sigma, s \in \EEj{p}(U) \} . \]

We shall prove that each $e_U(s)$ in $\ej{p+1}$ is a linear combination of values in $\ej{p}$; the fact that every $e_U(s)$ is a linear combination of values in $\ej{0}$ then follows by induction.

Consider $U \in \Sigma$, $s \in \EEj{p}(U)$, and $m \in U$. Let $U' := U \setminus \{ m \}$. Using no-signalling,
\[ e_{U'}(s |_{U'})  \; = \; \sum_j e_U(\sm{j}) . \]
Hence 
\begin{equation}
\label{margineq}
e_U(\sm{1}) \; = \; e_{U'}(s |_{U'}) \, - \,  \sum_{j \neq 1} e_U(\sm{j}) . 
\end{equation}
All the terms on the RHS of this equation are in $\ej{p}$; while every element of $\ej{p+1} \setminus \ej{p}$ can be written in the form of the LHS.

Unwinding the induction, every number $e_U(s)$ is given by a linear combination of values in $\ej{0}$, obtained by back-substitution in (\ref{margineq}). We can regard $\ej{0}$ as a vector $\ww$ in $\Real^D$, where for each $U \in \Sigma$ and $s \in \EEj{0}(U)$, $\ww_{U,s} = e_U(s)$. It follows that there is a linear map $F : \Real^D \lrarr \Real^d$ such that $\WW$ is included in the image of $F$, and hence that $\dim \WW \leq D$.
\end{proof}

\begin{lemma}
$\dim \VV \geq D$.
\end{lemma}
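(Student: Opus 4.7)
The plan is to exhibit $D$ test vectors whose images under $L$ have linearly independent restrictions to suitable coordinates, via a zeta-matrix / M\"obius-inversion argument. Since $\DS(\Omg)$ spans $\Real^{p}$ as a linear subspace, $\VV$ coincides with the full image $L(\Real^{p})$, so it is enough to show $\dim L(\Real^{p}) \geq D$.

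First, I would enlarge $L$ to record marginals over all partial contexts. For each $V \in \Sigma$ and $t \in O^{V}$, set $\tilde{L}(\vv)_{V,t} := \sum_{\omega|_{V} = t} \vv_{\omega}$. Whenever $V \subseteq U' \in \UU$, the marginalisation identity $\tilde{L}(\vv)_{V,t} = \sum_{s' \in O^{U'},\, s'|_{V} = t} L(\vv)_{U',s'}$ shows that $\tilde{L}(\vv)$ is determined by $L(\vv)$, so $\tilde{L}(\Real^{p})$ and $L(\Real^{p})$ have the same dimension.

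Next, for each canonical coordinate $(V,t)$ with $V \in \Sigma$ and $t \in \EEj{0}(V)$, I would introduce the test vector $\delta_{\omega_{V,t}} \in \Real^{p}$ where $\omega_{V,t} \in \Omg$ is defined by $\omega_{V,t}(x) := t(x)$ for $x \in V$ and $\omega_{V,t}(x) := 1$ otherwise. A direct computation shows that $\tilde{L}(\delta_{\omega_{V,t}})_{U,s} = 1$ iff $\omega_{V,t}|_{U} = s$; at another canonical coordinate $(U,s)$, the hypothesis that $s$ never takes the value $1$ forces $U \setminus V = \emptyset$, so this reduces to $U \subseteq V$ and $s = t|_{U}$.

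Hence the $D \times D$ matrix $M_{(U,s),(V,t)} := \tilde{L}(\delta_{\omega_{V,t}})_{U,s}$ is exactly the zeta matrix of the poset on canonical pairs given by $(U,s) \leq (V,t) \iff U \subseteq V \AND s = t|_{U}$. Ordering rows and columns by $\card{V}$ renders $M$ upper triangular with unit diagonal, hence invertible. Its columns --- which are the projections of $\tilde{L}(\delta_{\omega_{V,t}})$ onto the canonical coordinates --- therefore span $\Real^{D}$, so the projection $\rho : \tilde{L}(\Real^{p}) \to \Real^{D}$ is surjective and $\dim \VV = \dim \tilde{L}(\Real^{p}) \geq D$. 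The one point deserving attention is the bookkeeping in the third paragraph: canonicity of $(U,s)$ is precisely what forces $U \subseteq V$, which is the reason $M$ takes the clean form of a poset zeta matrix rather than something more intricate.
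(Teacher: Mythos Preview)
Your proof is correct and uses the same test vectors as the paper: the delta functions at the global assignments $\omega_{V,t}$ (the paper's $\tus$). The difference lies only in how linear independence is established. The paper works directly in $\Real^{d}$ (coordinates indexed by maximal contexts) and shows the coefficients vanish by induction on $\card{X \setminus U}$, choosing for each $(U',s')$ a coordinate $(V,s_{0})$ with $V \in \UU$, $V \supseteq U'$, and $s_{0} = \omega_{U',s'}|_{V}$; this $s_{0}$ is \emph{not} in $\EEj{0}(V)$, so one must analyse which other pairs $(U'',s'')$ contribute there and invoke the inductive hypothesis. Your extension to $\tilde{L}$ over all partial contexts lets you project instead onto the $D$ canonical coordinates themselves, where the resulting $D \times D$ matrix is visibly the zeta matrix of the restriction poset and hence unitriangular. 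This is a cleaner repackaging of the same induction; what it buys is that the triangular structure is immediate rather than extracted step by step. One minor remark: the marginalisation identity you state shows only $\dim \tilde{L}(\Real^{p}) \leq \dim L(\Real^{p})$, not equality, but that is exactly the direction you need, and the reverse inequality is trivial since the coordinates of $L$ occur among those of $\tilde{L}$.
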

\begin{proof}
Given $U \in \Sigma$ and $s \in \EEj{0}(U)$, we can define the global assignment $\tus : X \rarr O$:
\[ \tus(m) = s(m), \quad (m \in U), \qquad \tus(m) = 1, \quad (m \not\in U) . \]
Note that $\tus = \tusp$ implies $U = U'$ and $s = s'$.
Each such assignment $\tus$ defines a vector $\vus = L(\vv)$, where $\vv_{\tus} = 1$, $\vv_{\omega} = 0$ for $\omega \neq \tus$.
There are clearly  $D = \sum_{U \in \Sigma} (l-1)^{\card{U}}$ such assignments.
We shall show that the set of vectors $\{ \vus \}_{U \in \Sigma, s \in \EEj{0}(U)}$ is linearly independent.

Suppose that $\sum_{U, s} \mus \vus = \mathbf{0}$. 
We shall show that $\mus = 0$ for all $U, s$, by complete induction on $\card{X \setminus U}$. 

Given some $U', s'$, we choose $V \in \UU$ and $s_0 \in O^V$  such that $U' \subseteq V$ and $\tusp |_{V} = s_0$, so that $s_0 |_{U'} = s'$.
Note that, for any $U'', s''$,  $\vusq_{V, s_0} = 1$ if and only if $\tusq |_{V} = s_0$, if and only if $U'' \cap V = U'$, and $s'' |_{U'} = s'$.
If $\tusq \neq \tusp$, we must then have $U'' \supset U'$; so by induction hypothesis, $\musq = 0$.
Using the $(V, s_0)$ component of the vector equation $\sum_{U, s} \mus \vus = \mathbf{0}$, we conclude that $\musp = 0$.
\end{proof}

As an immediate corollary of these two lemmas, we obtain:
\begin{proposition}
\label{VWprop}
$\VV = \WW$.
\end{proposition}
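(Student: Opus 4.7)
The plan is a straightforward dimension-sandwich argument built on the two preceding lemmas together with the inclusion $\VV \subseteq \WW$ noted immediately before them (a consequence of Proposition~\ref{liimpnsprop} extended by linearity to the span). First I would record that since $\VV$ is a subspace of $\WW$ one has $\dim \VV \leq \dim \WW$. Combining this with the upper bound $\dim \WW \leq D$ and the lower bound $\dim \VV \geq D$ supplied by the two lemmas, the chain
\[ D \;\leq\; \dim \VV \;\leq\; \dim \WW \;\leq\; D \]
forces $\dim \VV = \dim \WW$. A subspace of a finite-dimensional vector space whose dimension matches that of the ambient space must coincide with the ambient space, so $\VV = \WW$.

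There is no real obstacle left at this stage: the substantive work has already been discharged in the two lemmas. The upper bound on $\dim \WW$ was obtained by using the no-signalling marginal identity (\ref{margineq}) to back-substitute every $e_U(s)$ as a linear combination of the $D$ canonical values in $\ej{0}$, while the lower bound on $\dim \VV$ came from exhibiting the explicit family $\{\vus\}_{U \in \Sigma, s \in \EEj{0}(U)}$ of $D$ vectors built from the global assignments $\tus$ and checking their linear independence by induction on $\card{X \setminus U}$. Given those bounds and the already established inclusion, Proposition~\ref{VWprop} is effectively a one-line corollary, and I would present it in that spirit rather than reworking any of the combinatorics.
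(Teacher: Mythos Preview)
Your proposal is correct and matches the paper's approach exactly: the paper introduces Proposition~\ref{VWprop} with the phrase ``As an immediate corollary of these two lemmas, we obtain'' and gives no further argument, relying precisely on the dimension-sandwich $D \leq \dim \VV \leq \dim \WW \leq D$ together with the inclusion $\VV \subseteq \WW$ already noted. There is nothing to add.
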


\begin{proposition}
\label{exDSprop}
Let $\ww$ be the vector induced by a no-signalling empirical model $e$. Then for some $\vv \in \DS(\Omg)$, $L(\vv) = \ww$.
\end{proposition}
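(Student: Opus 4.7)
The plan is to derive this proposition as a short corollary of Proposition \ref{VWprop} together with Proposition \ref{inDSprop}, which already do essentially all of the work.

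First I would unpack what the hypothesis gives us. By definition, $\ww$ lies in $\WW$, the span of the no-signalling empirical models. Invoking Proposition \ref{VWprop}, this means $\ww \in \VV$, i.e.\ $\ww$ lies in the linear span of the image $L(\DS(\Omg))$. So there exist signed distributions $\vv_1, \ldots, \vv_k \in \DS(\Omg)$ and real scalars $\alpha_1, \ldots, \alpha_k$ with
\[ \ww \; = \; \sum_{i=1}^{k} \alpha_i \, L(\vv_i). \]
Setting $\vv := \sum_i \alpha_i \vv_i \in \Real^p$ and using linearity of $L$, we immediately get $L(\vv) = \ww$.

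It remains to verify that this particular preimage $\vv$ is actually a signed probability measure, not merely an element of $\Real^p$ (note that a real linear combination of elements of $\DS(\Omg)$ need not sum to $1$ unless the coefficients do). But this is exactly the content of Proposition \ref{inDSprop}: since $\ww$ is by hypothesis induced by an empirical model $e$, and $L(\vv) = \ww$, that proposition forces $\sum_{\omega \in \Omg} \vv_\omega = 1$, so $\vv \in \DS(\Omg)$ as required.

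Conceptually, the only thing one might worry about is the normalization step, since a priori two different preimages of $\ww$ under $L$ could disagree in total mass. The main obstacle has already been overcome earlier, in the pair of dimension-counting lemmas that established $\VV = \WW$: once we know every no-signalling vector is in the image of $L$ restricted to $\DS(\Omg)$-combinations, the normalization condition $\sum_\omega \vv_\omega = 1$ comes out for free by Proposition \ref{inDSprop}, because $L$ preserves the ``total mass along each context'' structure. So the proof of this proposition is essentially a two-line invocation of the preceding results.
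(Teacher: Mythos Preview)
Your proof is correct and follows exactly the same route as the paper's own proof: invoke Proposition~\ref{VWprop} to write $\ww = \sum_i \alpha_i L(\vv_i)$, set $\vv := \sum_i \alpha_i \vv_i$ using linearity of $L$, and then apply Proposition~\ref{inDSprop} to obtain $\vv \in \DS(\Omg)$. The additional remarks you make about normalization are accurate and simply elaborate on why the appeal to Proposition~\ref{inDSprop} is needed.
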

\begin{proof}
By Proposition~\ref{VWprop}, $\ww = \sum_i \alpha_i L(\vv_i) = L(\sum_i \alpha_i \vv_i)$. Take $\vv := \sum_i \alpha_i \vv_i$. By Proposition~\ref{inDSprop}, $\vv \in \DS(\Omg)$.
\end{proof}

\begin{proposition}
\label{nsprop}
Every empirical model $e$ satisfying No-Signalling is realized by some schv model satisfying Lambda-Independence.
\end{proposition}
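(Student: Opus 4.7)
The plan is to directly assemble this result from Proposition~\ref{liequivprop} and Proposition~\ref{exDSprop}, which together already contain all the content. By Proposition~\ref{liequivprop}, to exhibit a Lambda-Independent schv model $m = m_{\UU} m_{\Omg}$ realizing a given $e$, it suffices to produce marginals $m_{\UU} \in \DS(\UU)$ and $m_{\Omg} \in \DS(\Omg)$ satisfying conditions (1) and (2) of that proposition.

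First I would define $m_{\UU}(U) := e(U)$ for each $U \in \UU$. Since $e$ is a probability distribution on $E$ with $\sum_{U \in \UU} e(U) = \sum_{(U,s) \in E} e(U,s) = 1$, this gives a bona fide probability distribution on $\UU$, in particular an element of $\DS(\UU)$. Condition (1) then holds by definition.

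Next I would obtain $m_{\Omg}$ via Proposition~\ref{exDSprop}. Let $\ww \in \Real^d$ be the vector induced by $e$, so $\ww_{(U,s)} = e_U(s)$. Since $e$ is No-Signalling, Proposition~\ref{exDSprop} supplies $\vv \in \DS(\Omg)$ with $L(\vv) = \ww$. Interpreting $\vv$ as a signed measure $m_{\Omg} \in \DS(\Omg)$, the equation $L(\vv) = \ww$ unpacks componentwise to
\[ e_U(s) = \ww_{(U,s)} = L(\vv)_{(U,s)} = \sum_{\omega \in \Omg(U,s)} m_{\Omg}(\omega) , \]
which is exactly condition (2) of Proposition~\ref{liequivprop}.

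Finally, set $m := m_{\UU} m_{\Omg}$. This is Lambda-Independent by definition, and by Proposition~\ref{liequivprop} it realizes $e$. There is no real obstacle here --- the substantive work has already been done in establishing $\VV = \WW$ via the two dimension-counting lemmas, from which Proposition~\ref{exDSprop} and hence this result follow immediately.
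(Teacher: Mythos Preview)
Your proof is correct and follows essentially the same approach as the paper: define $m_{\UU}(U) := e(U)$, obtain $m_{\Omg}$ from Proposition~\ref{exDSprop}, and invoke Proposition~\ref{liequivprop} to conclude that the product $m = m_{\UU} m_{\Omg}$ realizes $e$. The only differences are cosmetic --- you define $m_{\UU}$ before $m_{\Omg}$ and spell out a few verifications the paper leaves implicit.
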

\begin{proof}
Given $e$, which induces a vector $\ww$, define $\mO := \vv \in \DS(\Omg)$ with $L(\vv) = \ww$  by Proposition~\ref{exDSprop}. By definition of $L$, we have
\[ e_U(s) \; = \; \sum_{\omega \in \Omg(U, s)} m_{\Omg}(\omega) . \]
We can define a Lambda-Independent schv $m := m_{\UU} m_{\Omg}$, where $m_{\UU}(U) := e(U)$. By Proposition~\ref{liequivprop}, $\hat{m} = e$.
\end{proof}

Combining Propositions~\ref{liimpnsprop} and~\ref{nsprop}, we obtain our main result.

\begin{theorem}
An empirical model is No-Signalling if and only if it is realized by an schv model satisfying Lambda-Independence.
\end{theorem}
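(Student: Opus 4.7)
The plan is to split the biconditional into the two implications already prepared by the preceding propositions. The easy direction, that realization by a Lambda-Independent schv model forces No-Signalling, is already delivered by Proposition~\ref{liimpnsprop}: once $m = m_\UU m_\Omg$ factors, the conditional probability $e_U(s)$ for $s \in O^{U \cap U'}$ rewrites as $\sum_{\omega|_{U \cap U'} = s} m_\Omg(\omega)$, an expression manifestly independent of $U$. So the bulk of the work is in the reverse direction.

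For that direction, my approach is to pass through the linear-algebraic picture. I would encode empirical models as vectors $\ww \in \Real^d$ with $d = \card{E}$, encode signed distributions on $\Omg$ as vectors in $\Real^p$, and consider the marginalization map $L$ sending $\vv \mapsto (\sum_{\omega|_U = s} \vv_\omega)_{(U,s)}$. The goal is to show that every no-signalling empirical vector $\ww$ lies in $L(\DS(\Omg))$. Since the forward direction already gives $\VV \subseteq \WW$, where $\WW$ is the span of no-signalling models and $\VV$ is the span of $L(\DS(\Omg))$, it will suffice to establish $\dim \VV \geq \dim \WW$. The construction of the Lambda-Independent schv model is then routine: define $m_\UU(U) := e(U)$, take $m_\Omg$ to be the preimage provided by the equality of subspaces (which lands in $\DS(\Omg)$ by Proposition~\ref{inDSprop}), and verify $\hat{m} = e$ via Proposition~\ref{liequivprop}.

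The technical core is the pair of dimension bounds on a common quantity $D := \sum_{U \in \Sigma} (l-1)^{\card{U}}$. For the upper bound $\dim \WW \leq D$, I would use the no-signalling identity $e_{U \setminus \{m\}}(s|_{U \setminus \{m\}}) = \sum_j e_U(\sm{j})$ to solve for $e_U(\sm{1})$ in terms of probabilities whose supporting sections contain strictly fewer $1$'s. Iterating this reduction shows that every empirical probability is a linear combination of the values $e_U(s)$ with $s$ avoiding the outcome $1$, and there are exactly $D$ such slots. For the lower bound $\dim \VV \geq D$, I would produce, for each such $(U, s)$, a canonical global assignment $\tus$ that agrees with $s$ on $U$ and outputs $1$ elsewhere, and then show that the images $\vus := L(\delta_{\tus})$ form a linearly independent family.

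I expect the main obstacle to be this last linear independence. One must choose, for each target $(U', s')$, a context $V \in \UU$ extending $U'$ whose $(V, \tusp|_V)$-coordinate annihilates most of the other $\vus$, and one must order the induction by $\card{X \setminus U}$ so that the coefficients for strictly larger $U'' \supsetneq U'$ have already been shown to vanish before concluding $\musp = 0$. The upper bound argument, by contrast, is a mechanical back-substitution through the marginal equation, and the glue at the end is immediate from the earlier propositions.
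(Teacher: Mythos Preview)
Your proposal is correct and follows essentially the same route as the paper: the forward direction is exactly Proposition~\ref{liimpnsprop}, and the reverse direction is obtained by the dimension-counting argument $\dim \WW \leq D \leq \dim \VV$ via the two lemmas you describe (back-substitution through the marginal identity for the upper bound, and linear independence of the $\vus$ by induction on $\card{X \setminus U}$ for the lower bound), followed by the construction $m_{\UU}(U) := e(U)$, $m_{\Omg} := \vv$ and the appeal to Propositions~\ref{inDSprop} and~\ref{liequivprop}. There is no substantive deviation from the paper's argument.
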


\section{General hidden-variable models}

We now consider a more general class of hidden-variable models. In the main cases of interest, these are not more expressive, in terms of the empirical models they realize, than the canonical hidden-variable models we studied in the previous section. They do lead to some additional pathological   behaviour, however, and they lend additional perspective to our discussion.

A (general) hidden-variable model is specified by a set $\Lambda$ of hidden variables, and a measure $h \in \DS(E \times \Lambda)$. We say that $h$ realizes an empirical model $e$ if for all $U \in \UU$, $s \in O^U$:
\[ e(U, s) = \sum_{\lambda \in \Lambda} h(U, s, \lambda) . \]
We can define a measure $\hL \in \DS(\Lambda)$ by marginalization, and for each $\lambda \in \Lambda$, a measure $\hl \in \DS(E)$ as the conditional probability $\hl(U, s) := h(U, s | \lambda)$. We say that $h$ is \emph{Lambda-Independent} if it factors as
\[ h(U, s, \lambda) = \hl(U, s) \hL(\lambda) . \]
We say that $h$ is \emph{Parameter-Independent} \cite{jarrett1984physical,shimony1986events} if for all $U, V \in \UU$, $s \in O^{U \cap V}$, $\lambda \in \Lambda$:
\[ h(s | U, \lambda) = h(s | V, \lambda) . \]

The following result is immediate:

\begin{proposition}
\label{lipinsprop}
If $h$ is Lambda Independent and Parameter Independent, and $h$ realizes $e$, then $e$ is No-Signalling.
\end{proposition}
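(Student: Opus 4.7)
Fix contexts $U, V \in \UU$ and a joint assignment $t \in O^{U \cap V}$. The No-Signalling condition requires
\[ \sum_{\substack{s \in O^U \\ s|_{U \cap V} = t}} e_U(s) \; = \; \sum_{\substack{s \in O^V \\ s|_{U \cap V} = t}} e_V(s), \]
so I would compute both sides in terms of $h$ and show they coincide.

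\textbf{First step: unwind.} Using $e_U(s) = e(U,s)/e(U)$ and the realization hypothesis $e(U,s) = \sum_\lambda h(U,s,\lambda)$, the LHS (before dividing by $e(U)$) is
\[ \sum_\lambda \sum_{\substack{s \in O^U \\ s|_{U \cap V} = t}} h(U,s,\lambda). \]
Write $h(U,\lambda) := \sum_{s \in O^U} h(U,s,\lambda)$ for the $(U,\lambda)$-marginal. By Parameter Independence, the ratio
\[ f(\lambda, t) \; := \; \frac{\sum_{s: s|_{U \cap V} = t} h(U,s,\lambda)}{h(U,\lambda)} \; = \; \frac{\sum_{s: s|_{U \cap V} = t} h(V,s,\lambda)}{h(V,\lambda)} \]
is symmetric in the two contexts, so the inner sum above equals $f(\lambda, t)\, h(U,\lambda)$, and analogously for $V$.

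\textbf{Second step: factor out the context dependence.} Summing the Lambda-Independence identity $h(U,s,\lambda) = \hl(U,s)\,\hL(\lambda)$ over $s \in O^U$ yields $h(U,\lambda) = \hl(U)\,\hL(\lambda)$ with $\hl(U) := \sum_s \hl(U,s)$; the independence content is that this marginal splits as a product of a $\lambda$-factor and a $U$-factor, and marginalizing further over $\lambda$ (using $\sum_\lambda \hL(\lambda) = 1$) identifies the $U$-factor with $e(U)$. Hence $h(U,\lambda) = e(U)\,\hL(\lambda)$. Plugging back,
\[ \sum_{\substack{s \in O^U \\ s|_{U \cap V} = t}} e(U,s) \; = \; e(U) \sum_\lambda \hL(\lambda)\, f(\lambda,t), \]
so dividing by $e(U) > 0$ gives $\sum_{s: s|_{U \cap V} = t} e_U(s) = \sum_\lambda \hL(\lambda)\, f(\lambda,t)$. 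The right-hand side is manifestly symmetric in $U$ and $V$ (it depends on them only through $f(\lambda,t)$, which by Parameter Independence does not distinguish them), so the analogous computation with $V$ in place of $U$ yields the same value, proving No-Signalling.

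\textbf{Main obstacle.} The only subtlety is that we are working with \emph{signed} measures, so the conditional probability $f(\lambda,t)$ need not be well-defined when $h(U,\lambda) = 0$ or $\hL(\lambda) = 0$. I would sidestep this by phrasing everything through the unconditional identity $\sum_{s: s|_{U\cap V}=t} h(U,s,\lambda) = f(\lambda,t)\, h(U,\lambda)$ as an algebraic relation encoded by Parameter Independence (and the analogous factorization for Lambda Independence), so that the proof remains a polynomial identity in the data of $h$ and no genuine divisions are performed.
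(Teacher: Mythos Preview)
The paper gives no proof of this proposition at all --- it simply declares the result ``immediate'' --- so your argument is the natural expansion of what the authors had in mind: use Parameter Independence to make the conditional $h(t\mid U,\lambda)$ context-free, use Lambda Independence to make the marginal $h(U,\lambda)$ factor as $e(U)\,\hL(\lambda)$, and combine. That is exactly the right shape, and your final caveat about signed measures and avoiding genuine divisions is well placed.

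There is one soft spot worth tightening. In your Second step you write that summing the displayed Lambda-Independence identity $h(U,s,\lambda)=\hl(U,s)\,\hL(\lambda)$ over $s$ shows that ``this marginal splits as a product of a $\lambda$-factor and a $U$-factor''. But as literally stated in the paper, that identity is just the chain rule (conditional times marginal) and therefore holds for \emph{every} $h$; it does not by itself force $\hl(U)$ to be independent of $\lambda$. What you actually need --- and what the paper clearly intends, by analogy with the canonical case $m=m_{\Omg}m_{\UU}$ and Proposition~2.2 --- is that the context $U$ and the hidden variable $\lambda$ are statistically independent, i.e.\ $h(U,\lambda)=h_{\UU}(U)\,\hL(\lambda)$, from which $h(U,\lambda)=e(U)\,\hL(\lambda)$ follows since $h$ realizes $e$. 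Once you invoke Lambda Independence in that (intended) form, your computation goes through cleanly; just make the appeal explicit rather than asserting the product structure as ``the independence content''.
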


A canonical hidden-variable model $m$ gives rise to a general hidden-variable model $h_m$ on the hidden-variable set $\Omega$, defined by:
\[ h_m(U, s, \omega) = \left\{ \begin{array}{lr}
m(\omega, U), & \omega |_{U} = s \\
0 & \mbox{otherwise}
\end{array}
\right.
\]

Propositions~\ref{liimpnsprop} and~\ref{lipinsprop} are brought into harmony by the following result.
\begin{proposition}
\label{chvpiprop}
For any canonical hidden variable model $m$, $h_m$ is Parameter-Independent.
\end{proposition}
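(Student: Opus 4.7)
The plan is to unfold the definition of $h_m$ and exploit the fact that a canonical hidden variable $\omega \in \Omega = O^X$ is already a complete deterministic assignment of outcomes to \emph{every} measurement, so once $\omega$ is fixed the section observed on any context $U$ is forced to be $\omega|_U$.

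First I would compute the marginal $h_m(U, \omega) := \sum_{s' \in O^U} h_m(U, s', \omega)$. By the definition of $h_m$, the only $s'$ contributing a nonzero term is $s' = \omega|_U$, so this marginal equals $m(\omega, U)$. Next, for any $s \in O^{U \cap V}$, I would compute
\[ h_m(s \mid U, \omega) \; = \; \frac{\sum_{s' \in O^U,\, s'|_{U \cap V} = s} h_m(U, s', \omega)}{h_m(U, \omega)}. \]
Again only $s' = \omega|_U$ contributes to the numerator, and it contributes $m(\omega, U)$ exactly when $(\omega|_U)|_{U \cap V} = \omega|_{U \cap V} = s$, and $0$ otherwise. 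Hence
\[ h_m(s \mid U, \omega) \; = \; [\, \omega|_{U \cap V} = s \,], \]
which depends only on the restriction of $\omega$ to $U \cap V$, not on $U$ itself. Repeating the calculation with $V$ in place of $U$ yields the identical indicator, establishing $h_m(s|U, \omega) = h_m(s|V, \omega)$.

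The only mildly delicate point, and what I would flag as the main obstacle, is the degenerate case $m(\omega, U) = 0$, where the conditional is formally undefined (and additionally signed measures make conditioning subtle in general). I would handle this either by adopting the convention that $h_m(s|U, \omega) := 0$ whenever $h_m(U, \omega) = 0$ (which makes the two sides of the Parameter-Independence equation vacuously equal, since the same convention applies symmetrically to $V$), or by restricting attention to those $\omega$ for which $m(\omega, U)$ and $m(\omega, V)$ are both nonzero, observing that the conclusion is trivial otherwise. With this caveat in place, the proof reduces to the observation above, which is essentially the statement that determinism of canonical hidden variables implies parameter independence.
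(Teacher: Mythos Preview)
Your proof is correct and follows the same approach as the paper, which simply asserts that $h_m(s \mid U, \omega)$ equals the indicator $[\,\omega|_{U \cap V} = s\,]$ without writing out the intermediate computation you give. One caution on the degenerate case: your first suggested convention (setting the conditional to $0$ when the marginal vanishes) does \emph{not} make the two sides agree when $m(\omega, U) = 0$ but $m(\omega, V) \neq 0$, so you should rely on your second option of restricting to $\omega$ for which both conditionals are defined --- which is also what the paper implicitly does.
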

\begin{proof}
From the definitions, we have that, for $s \in O^{U \cap V}$:
\[ h_m(s | U, \omega) = \left\{ \begin{array}{lr}
1, & \omega |_{U \cap V} = s \\
0 & \mbox{otherwise}
\end{array}
\right.
\]
and similarly for $h_m(s | V, \omega)$.
\end{proof}

Non-canonical hidden-variable models can be less well-behaved. 

\begin{proposition}
There is a hidden-variable model $h$ which is Lambda-Independent but not Parameter-Independent, such that the empirical model $e$ which it realizes is No-Signalling.
\end{proposition}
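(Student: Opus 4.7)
My plan is to exhibit $h$ explicitly by adding a signalling ``ghost'' to a trivial realization and then killing it off in the marginal via a negative weight. Fix a measurement cover $\mc$ admitting both a No-Signalling empirical model $e$ and a separately chosen \emph{signalling} distribution $a \in \Prob(E)$ --- the simplest instance being the $(2,2,2)$ Bell cover with $e$ the uniform empirical model and $a$ concentrated so that Alice's marginal on one of her measurements depends on Bob's choice of setting.

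Let $\Lambda := \{\lambda_0, \lambda_1, \lambda_2\}$ with $\hL(\lambda_0) = \hL(\lambda_1) = 1$ and $\hL(\lambda_2) = -1$, which sum to $1$. Set the conditionals $h^{\lambda_0} := e$ and $h^{\lambda_1} := h^{\lambda_2} := a$, and define
\[ h(U, s, \lambda) \; := \; h^{\lambda}(U, s)\,\hL(\lambda) . \]
Lambda-Independence holds by construction. Realization is immediate by cancellation:
\[ \hat{h}(U, s) \; = \; \sum_{\lambda \in \Lambda} h(U, s, \lambda) \; = \; e(U, s) + a(U, s) - a(U, s) \; = \; e(U, s), \]
so $\hat{h} = e$ is No-Signalling. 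Failure of Parameter-Independence is witnessed at $\lambda_1$: the conditional $h(\,\cdot \mid U, \lambda_1)$ is just $a_U$ and inherits the signalling of $a$, so there exist $U, V \in \UU$ and $s \in O^{U \cap V}$ for which the $U$-marginal and $V$-marginal of $a$ onto $U \cap V$ disagree at $s$.

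The only real work is picking a concrete $a$, which is straightforward on any cover having two incompatible contexts sharing at least one common measurement. Once $a$ is fixed, no further analysis is needed --- the algebraic cancellation does all the work, and the use of a \emph{signed} weight $\hL(\lambda_2) = -1$ is precisely what lets the signalling contributions cancel in $\hat{h}$ without forcing the individual $h^{\lambda_i}$ to be Parameter-Independent. This is exactly the feature that distinguishes general hidden-variable models from the canonical ones treated by Proposition~\ref{chvpiprop}.
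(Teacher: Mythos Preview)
Your construction is correct: the product form guarantees Lambda-Independence (the chosen $\hL$ and $h^{\lambda}$ really are the marginal and conditional of $h$, since each $h^{\lambda}$ is normalised), the $\pm a$ terms cancel in the realization, and the conditional $h(\,\cdot\mid U,\lambda_1)$ is exactly $a_U$, so any signalling $a$ with $a(U),a(V)>0$ witnesses failure of Parameter-Independence.

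However, your route differs from the paper's in one notable respect. The paper exhibits a fully \emph{non-negative} model ($h\in\Prob(E\times\Lambda)$, two hidden variables, all weights $1/4$) that is Lambda-Independent, realizes a No-Signalling empirical model, yet fails Parameter-Independence. Your argument instead relies essentially on the signed weight $\hL(\lambda_2)=-1$ to kill the signalling ghost. Both establish the proposition as stated, since general hidden-variable models are defined over $\DS$; but the paper's example carries extra information, namely that the pathology already occurs for ordinary probabilistic hidden-variable models and is not an artefact of allowing negative weights. Your cancellation trick, on the other hand, is pleasantly mechanical and makes transparent \emph{why} the failure is possible: nothing in the definition of realization couples the individual conditionals $h^{\lambda}$ to the No-Signalling constraint on $\hat{h}$.
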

\begin{proof}
Consider the hidden-variable model $h$ over the cover $\{ \{ a, b_0\}, \{ a, b_1 \}\}$ with outcome set $\{ x_0, x_1, y \}$ and hidden variable set $\{ \lambda_0, \lambda_1 \}$ which assigns probability $1/4$ to each of the following arguments:
\[ (U, s_0, \lambda_0), \quad (V, s_1, \lambda_1),\quad (V, s_2, \lambda_0), \quad (U, s_3, \lambda_1) . \]
Here $U =  \{ a, b_0\}$, $V = \{ a, b_1 \}$, $s_0 = \{ a \mapsto x_0, b_0 \mapsto y \}$,
$s_1 = \{ a \mapsto x_0, b_1 \mapsto y \}$, $s_2 = \{ a \mapsto x_1, b_1 \mapsto y \}$, $s_3 = \{ a \mapsto x_1, b_0 \mapsto y \}$.
Simple calculations show that this model satisfies Lambda-Independence, and the induced empirical model is No-Signalling.
However, $h(x_0 | a, b_0, \lambda_0) = 1/4$, while $h(x_0 | a, b_1, \lambda_0) = 0$, so the model does not satisfy Parameter-Independence.
\end{proof}

\subsection*{Acknowledgements}
Our thanks to Ray Lal, Elliot Lipnowski and Shane Mansfield for valuable comments, and to U.K. EPSRC EP/I03596X/1, U.S. ONR N000141010357, U.S. AFOSR FA9550-12-1-0136, the Templeton Foundation, and NYU Stern School of Business for financial support.

\bibliographystyle{eptcs}

\begin{thebibliography}{1}
\providecommand{\bibitemdeclare}[2]{}
\providecommand{\urlprefix}{Available at }
\providecommand{\url}[1]{\texttt{#1}}
\providecommand{\href}[2]{\texttt{#2}}
\providecommand{\urlalt}[2]{\href{#1}{#2}}
\providecommand{\doi}[1]{doi:\urlalt{http://dx.doi.org/#1}{#1}}
\providecommand{\bibinfo}[2]{#2}

\bibitemdeclare{article}{abramsky2011unified}
\bibitem{abramsky2011unified}
\bibinfo{author}{S.~Abramsky} \& \bibinfo{author}{A.~Brandenburger}
  (\bibinfo{year}{2011}): \emph{\bibinfo{title}{The Sheaf-Theoretic Structure
  Of Non-Locality and Contextuality}}.
\newblock {\sl \bibinfo{journal}{New Journal of Physics}}
  \bibinfo{volume}{13(2011)}, p. \bibinfo{pages}{113036}, \doi{10.1088/1367-2630/13/11/113036}.

\bibitemdeclare{misc}{negprob}
\bibitem{negprob}
\bibinfo{author}{S.~Abramsky} \& \bibinfo{author}{A.~Brandenburger}
  (\bibinfo{year}{2014}): \emph{\bibinfo{title}{An Operational Interpretation
  of Negative Probabilities and No-Signalling Models}}.
\newblock \bibinfo{note}{ArXiv preprint arXiv:1401.2561. Submitted for
  publication}.

\bibitemdeclare{article}{bell1964einstein}
\bibitem{bell1964einstein}
\bibinfo{author}{J.S. Bell} (\bibinfo{year}{1964}): \emph{\bibinfo{title}{{On
  the {E}instein-{P}odolsky-{R}osen paradox}}}.
\newblock {\sl \bibinfo{journal}{Physics}}
  \bibinfo{volume}{1}(\bibinfo{number}{3}), pp. \bibinfo{pages}{195--200}.

\bibitemdeclare{book}{dickson1999quantum}
\bibitem{dickson1999quantum}
\bibinfo{author}{W.M. Dickson} (\bibinfo{year}{1999}):
  \emph{\bibinfo{title}{{Quantum Chance and Non-Locality}}}.
\newblock \bibinfo{publisher}{Cambridge University Press}, \doi{10.1017/cbo9780511524738}.

\bibitemdeclare{article}{Dirac42}
\bibitem{Dirac42}
\bibinfo{author}{P.A.M. Dirac} (\bibinfo{year}{1942}):
  \emph{\bibinfo{title}{The Physical Interpretation of Quantum Mechanics}}.
\newblock {\sl \bibinfo{journal}{Proceedings of the Royal Society of London.
  Series A, Mathematical and Physical Sciences}}
  \bibinfo{volume}{180}(\bibinfo{number}{980}), pp. \bibinfo{pages}{1--40}, \doi{10.1098/rspa.1942.0023}.

\bibitemdeclare{incollection}{feynman1987negative}
\bibitem{feynman1987negative}
\bibinfo{author}{R.P. Feynman} (\bibinfo{year}{1987}):
  \emph{\bibinfo{title}{Negative probability}}.
\newblock In \bibinfo{editor}{B.J. Hiley} \& \bibinfo{editor}{F.D. Peat},
  editors: {\sl \bibinfo{booktitle}{Quantum Implications: Essays in Honour of
  David Bohm}}, \bibinfo{publisher}{Routledge and Kegan Paul}, pp.
  \bibinfo{pages}{235--248}.

\bibitemdeclare{article}{jarrett1984physical}
\bibitem{jarrett1984physical}
\bibinfo{author}{J.P. Jarrett} (\bibinfo{year}{1984}):
  \emph{\bibinfo{title}{{On the physical significance of the locality
  conditions in the Bell arguments}}}.
\newblock {\sl \bibinfo{journal}{No{\^u}s}}
  \bibinfo{volume}{18}(\bibinfo{number}{4}), pp. \bibinfo{pages}{569--589}, \doi{10.2307/2214878}.

\bibitemdeclare{article}{kochen1975problem}
\bibitem{kochen1975problem}
\bibinfo{author}{S.~Kochen} \& \bibinfo{author}{E.P. Specker}
  (\bibinfo{year}{1967}): \emph{\bibinfo{title}{{The problem of hidden
  variables in quantum mechanics}}}.
\newblock {\sl \bibinfo{journal}{Journal of Mathematics and Mechanics}}
  \bibinfo{volume}{17}(\bibinfo{number}{1}), pp. \bibinfo{pages}{59--87}, \doi{10.1007/978-3-0348-9259-9-21}.

\bibitemdeclare{article}{moyal1949quantum}
\bibitem{moyal1949quantum}
\bibinfo{author}{J.E. Moyal} (\bibinfo{year}{1949}):
  \emph{\bibinfo{title}{Quantum mechanics as a statistical theory}}.
\newblock {\sl \bibinfo{journal}{Mathematical Proceedings of the Cambridge
  Philosophical Society}} \bibinfo{volume}{45}(\bibinfo{number}{01}), pp.
  \bibinfo{pages}{99--124}, \doi{10.1017/s0305004100000487}.

\bibitemdeclare{article}{popescu1994quantum}
\bibitem{popescu1994quantum}
\bibinfo{author}{S.~Popescu} \& \bibinfo{author}{D.~Rohrlich}
  (\bibinfo{year}{1994}): \emph{\bibinfo{title}{{Quantum nonlocality as an
  axiom}}}.
\newblock {\sl \bibinfo{journal}{Foundations of Physics}}
  \bibinfo{volume}{24}(\bibinfo{number}{3}), pp. \bibinfo{pages}{379--385}, \doi{10.1007/bf02058098}.

\bibitemdeclare{incollection}{shimony1986events}
\bibitem{shimony1986events}
\bibinfo{author}{A.~Shimony} (\bibinfo{year}{1986}):
  \emph{\bibinfo{title}{{Events and processes in the quantum world}}}.
\newblock In \bibinfo{editor}{R.~Penrose} \& \bibinfo{editor}{C.J. Isham},
  editors: {\sl \bibinfo{booktitle}{{Quantum Concepts in Space and Time}}},
  \bibinfo{publisher}{Oxford University Press}, pp. \bibinfo{pages}{182--203}, \doi{10.1017/cbo9781139172196.011}.

\bibitemdeclare{article}{sudarshan1993new}
\bibitem{sudarshan1993new}
\bibinfo{author}{E.C.G. Sudarshan} \& \bibinfo{author}{T.~Rothman}
  (\bibinfo{year}{1993}): \emph{\bibinfo{title}{A new interpretation of
  {B}ell's inequalities}}.
\newblock {\sl \bibinfo{journal}{International Journal of Theoretical Physics}}
  \bibinfo{volume}{32}(\bibinfo{number}{7}), pp. \bibinfo{pages}{1077--1086}, \doi{10.1007/bf00671790}.

\bibitemdeclare{article}{wigner1932quantum}
\bibitem{wigner1932quantum}
\bibinfo{author}{E.~Wigner} (\bibinfo{year}{1932}): \emph{\bibinfo{title}{On
  the quantum correction for thermodynamic equilibrium}}.
\newblock {\sl \bibinfo{journal}{Physical Review}}
  \bibinfo{volume}{40}(\bibinfo{number}{5}), p. \bibinfo{pages}{749}, \doi{10.1103/physrev.40.749}.

\end{thebibliography}

\end{document}